\documentclass[aps,prl,reprint,twocolumn,nofootinbib,floatfix,
  superscriptaddress]{revtex4-2}

\usepackage[T1]{fontenc}
\usepackage{lmodern}
\usepackage{microtype}
\usepackage{float}
\usepackage{amsmath,amssymb,amsthm,mathtools,mathrsfs}
\usepackage{tikz}
\usepackage{capt-of}
\usepackage[bookmarksnumbered=true,bookmarksopen=true]{hyperref}
 \hypersetup{colorlinks,%
             linkcolor=[rgb]{0,0.3,0.6}, %
             citecolor=[rgb]{0,0.3,0.6}, %
             urlcolor=[rgb]{0,0.3,0.6}}

\usetikzlibrary{arrows.meta,patterns}
\newcommand{\Res}{\operatorname*{Res}}

\newcommand{\arsinh}{\operatorname{arsinh}}

\newtheorem{lemma}{Lemma}
\newtheorem{proposition}[lemma]{Proposition}
\begin{document}

\title{Exact All-Level Positivity of the Bosonic Veneziano Amplitude}

\author{Qi Chen}
\email{chenqi25@ucas.ac.cn}
\affiliation{Kavli Institute for Theoretical Sciences, University of Chinese
  Academy of Sciences, Beijing 100190, China}

\author{Yuan Yin}
\email{yinyuan@kias.re.kr}
\affiliation{School of Physics, Korea Institute for Advanced Study,
  Seoul 02455, Korea}

\begin{abstract}
Tree-level partial-wave unitarity of the open bosonic Veneziano amplitude
follows indirectly from the no-ghost theorem, but a direct proof from the
beta function has been missing at the critical dimension.  We prove
nonnegativity at every mass level and spin throughout the sharp range
$3<D\leq26$, using a Bessel recurrence that reduces the infinite sign problem
to a manifestly positive bulk and a finite exact boundary.
\end{abstract}

\maketitle

\paragraph{Introduction.} The Veneziano amplitude is the prototype of a relativistic scattering
amplitude with an infinite spectrum.  In one Euler beta function it realizes
duality between crossed channels, a tower of real simple poles with
polynomial residues, and Regge behavior governed by a linear trajectory
\cite{Veneziano:1968yb,Virasoro:1969me,Shapiro:1970gy,Coon:1969yw,Cheung:2022mkw,Geiser:2022exp,Cheung:2023uwn}.  Its poles describe open-string excitations and its
residues encode their couplings~\cite{Arkani-Hamed:2017jhn,Arkani-Hamed:2023jwn,Arkani-Hamed:2024nzc,Cheung:2025nhw}.  This economy helped reveal the string
interpretation of the dual-resonance model and still makes the amplitude a
benchmark for asking how much consistency is visible directly on shell~\cite{Cheung:2017pzi,Caron-Huot:2016icg,Haring:2023zwu,
Cheung:2023adk,Cheung:2024uhn,Cheung:2024obl,Cheung:2025tbr,
Berman:2023jys,Chiang:2023quf,Berman:2024wyt,Albert:2024yap,
Elvang:2026pmc,Boisvert:2026sfh}.

Unitarity is essential but not manifest in the beta-function formula.  At a
tree-level pole, the amplitude must factorize through on-shell intermediate
states.  After decomposing the residue into $SO(D-1)$ spin partial waves,
each coefficient is, up to a positive normalization, a sum of absolute
squares of three-point couplings and must therefore be nonnegative~\cite{Arkani-Hamed:2017jhn,Arkani-Hamed:2022gsa}.  A
negative coefficient would preclude factorization through a positive-norm
state space.  Since the degrees of the residues grow without bound, this is
an infinite family of inequalities, one for every mass level and allowed
spin.  Remarkably, although the beta function contains no explicit
spacetime dimension, its partial-wave basis does.  The first massive scalar
coefficient is proportional to $26-D$, showing that positivity is impossible
for $D>26$~\cite{Arkani-Hamed:2022gsa}.  This establishes a necessary endpoint, but not its sufficiency:
a higher-level coefficient could in principle impose a stronger bound.

At $D=26$, positivity follows conventionally from the no-ghost theorem
~\cite{DelGiudice:1971yjh,Brower:1972wj,Goddard:1972iy}.  Covariant quantization imposes the Virasoro constraints
on the oscillator Fock space and removes null states, producing a
nonnegative-norm physical state space.  Factorization through this space
then implies positivity of the residues.  This gives a complete state-space
explanation of ghost freedom, but it does not reveal the identity within the
beta function that enforces the sign of each Gegenbauer coefficient.  A
direct amplitude-level proof is therefore complementary: it asks how the
analytic structure of the amplitude itself organizes the infinitely many
unitarity inequalities~\cite{Green:2019tpt,Maity:2021obe}.

The double-contour representation of Ref.~\cite{Arkani-Hamed:2022gsa} made substantial
progress in this direction.  For the open bosonic amplitude it proves
all-level, all-spin positivity in $D\leq10$ and provides partial and
asymptotic information above that range, but not a uniform proof at $D=26$.
Mansfield subsequently completed the analogous type-I problem through its
critical dimension $D=10$ ~\cite{Mansfield:2025gca}.  Coon and hypergeometric
deformations further illustrate the value of amplitude-intrinsic methods
when no conventional string state-space construction is known
~\cite{Chakravarty:2022vrp,Bhardwaj:2022lbz,Figueroa:2022onw,
Jepsen:2023sia,Rigatos:2024beq,Wang:2024wcc,Rigatos:2023asb,
Mansfield:2024wjc,Bhardwaj:2024klc}.

Here we prove directly from the beta function that every bosonic Veneziano
partial-wave coefficient is nonnegative at every level and spin in $D=26$.
A positive Gegenbauer connection formula extends the result to all real
$3<D<26$, while the first massive scalar excludes every $D>26$; the range is
therefore sharp.  The parity zeros and exceptional low-level saturations are
handled exactly.

The proof converts the infinite sign problem into a finite-width one.  An
exact generating-function transform reduces each allowed partial wave to one
Taylor coefficient.  Its Bessel equation yields an acyclic recurrence with a
manifestly positive bulk and a fixed six-step terminal strip, independent of
level and spin.  A first-exit expansion reduces this strip to an exact
integer-polynomial identity whose positive coefficients establish all
remaining signs uniformly.  The accompanying Mathematica notebook verifies
this finite identity using exact arithmetic, without scanning levels or
spins; complete derivations are given in the Supplemental Material.  The result thus provides an amplitude-level mechanism
complementary to the no-ghost theorem and suggests a strategy for other
dual-resonance amplitudes.

\emph{The Veneziano amplitude.} The Veneziano amplitude is given by~\cite{Veneziano:1968yb}:
\begin{equation}
{\cal A}(s,t)=\frac{\Gamma(-s-1)\Gamma(-t-1)}{\Gamma(-s-t-2)}
\end{equation}
At tree level, unitarity is encoded in factorization at physical poles.  Near
such a pole, the residue is a sum over on-shell intermediate states weighted
by products of their couplings.  In a positive-norm partial-wave basis, the
corresponding coefficients must therefore be nonnegative, motivating a direct
analysis of the pole residues.  Define the $s$-channel residue polynomial
$R_n(t)$ at the pole $s=n$ by
\begin{align}
 R_n(t)&=-\Res_{s=n}{\cal A}(s,t)
 =\frac{(t+2)_{n+1}}{(n+1)!},
 \label{eq:residue}
\end{align}
where $n=-1,0,1,\ldots$.  Thus $R_n$ differs by a minus sign
from the complex-analytic residue of the unsigned quotient.  Because
Gegenbauer polynomials furnish the partial-wave basis in $D$ dimensions,
expanding the residues in this basis isolates the contribution of each
intermediate spin and allows its unitarity properties to be examined
separately.  With
$x=\cos\theta$ and $t=(n+4)(x-1)/2$, the residues can be expanded as
\begin{equation}
 R_n(x)=\sum_{j\geq0}B^D_{n,j}
 C_j^{(D-3)/2}(x),\qquad R_{-1}=1.
 \label{eq:partial-waves}
\end{equation}
Here $C_j^{(D-3)/2}$ is the spin-$j$ Gegenbauer partial wave in $D$
dimensions, while $B^D_{n,j}$ is its coefficient at level $n$.  In a
positive-norm basis, factorization gives these coefficients a direct physical
meaning: up to a positive convention-dependent normalization, they are sums
of modulus squares of three-point couplings to intermediate spin-$j$ states,
or equivalently the cross sections for resonant production of those states.
At criticality the no-ghost theorem guarantees this state-space
interpretation; \emph{our proof establishes the same signs directly from the
amplitude, without constructing the intermediate states.}  In this letter, we prove that,
\begin{equation}
 B^D_{n,j}\geq0\quad
 (n\geq-1,\ j\geq0,\ 3<D\leq26).
 \label{eq:theorem}
\end{equation}
At $D=26$, $B^{26}_{n,j}$ vanishes unless $0\leq j\leq n+1$ and
$j\equiv n+1\pmod2$; within this allowed sector, the only zeros are
$B^{26}_{1,0}=B^{26}_{2,1}=0$.  The upper bound $D\leq26$ is sharp, as is
already evident from the scalar channel of the first massive pole:
\begin{equation}
 B^D_{1,0}=\frac{26-D}{8(D-1)}.
 \label{eq:first-obstruction}
\end{equation}
This coefficient vanishes at $D=26$ and is negative for every $D>26$.
Conversely, the all-level theorem shows that no further obstruction arises at
higher poles for $3<D\leq26$.

\emph{Proof outline.}
At $D=26$, an exact generating-function transform expresses each allowed
$B^{26}_{n,j}$ as the product of a positive prefactor and a single Taylor
coefficient $h_\ell$, where $\ell$ is the daughter-trajectory depth.  The Bessel
equation for this generating function yields an acyclic recurrence for its
Taylor coefficients.  After reparametrizing the indices by nonnegative
integers, the recurrence kernel is manifestly positive everywhere except in
a fixed six-step strip adjacent to the target coefficient.  Expanding only
inside this strip and stopping at its first exit into the positive bulk
expresses $h_\ell$ as a sum of coefficients already known to be positive,
with total exit weights $E_R$.  The Bessel equation thus supplies the
infinite-to-finite mechanism; one fixed integer polynomial with strictly
positive coefficients proves $E_R>0$ uniformly in level and spin.  The
shallow cases $\ell<6$ are checked from exact closed forms.  Finally, the
positive Gegenbauer connection formula descends the critical result to every
$3<D<26$, while the first massive scalar coefficient excludes every $D>26$.

\emph{Reduction to one coefficient.}
We first replace the spin-$j$ Gegenbauer projection at level $n$ by a
one-variable coefficient-extraction problem.  This step has two purposes:
it isolates all sign information in a single Taylor coefficient multiplied
by a manifestly positive normalization, and it packages that coefficient
in a generating function to which the Bessel equation can be applied.

To make the parity manifest, define the shifted pole level $m=n+1$.  The residue polynomial then has degree $m$ and definite parity,
\begin{equation}
R_n(-x)=(-1)^mR_n(x).
\end{equation}
Consequently, its Gegenbauer expansion can be nonzero only when
$0\leq j\leq m$ and $j\equiv m\pmod 2$.  For notational convenience, introduce the following combinations:
\begin{equation}
 \begin{aligned}
  \ell&=\frac{m-j}{2},&
  \qquad c&=j+2\ell+1=m+1,\\[-0.2em]
  b&=j+\frac{25}{2},&
  \kappa&=c+2=m+3.
 \end{aligned}
\end{equation}
The leading Regge trajectory has spin $j=m$; within the fixed-parity
sector, each increment of $\ell$ lowers the spin by two.  Thus $\ell$ is
the daughter-trajectory depth probed by the residue.

The Gegenbauer coefficients may be obtained from the standard orthogonality
integral, or equivalently from a contour integral in the complexified scattering-angle variable.  Instead, we avoid this contour representation by
using the central-factorial generating function, which yields
\begin{equation}
 R_n(x)=[u^m]\frac{
 \exp\!\bigl(\kappa x\,\operatorname{arsinh}(u/2)\bigr)}
 {\sqrt{1+u^2/4}}.
\end{equation}
Here $[u^m]F(u)$ denotes the coefficient of $u^m$ in the Taylor expansion of
$F(u)$ about $u=0$.  The Gegenbauer projection is implemented through the
plane-wave expansion
\begin{equation}
 e^{ax}=\sum_{j\geq0}
 \frac{\Gamma(\lambda)a^j}{2^j\Gamma(j+\lambda)}
 {}_0F_1\!\left(;j+\lambda+1;\frac{a^2}{4}\right)
 C_j^\lambda(x),
 \label{eq:gegenbauer-plane-wave}
\end{equation}
valid for $\lambda>0$.  Thus each Gegenbauer coefficient is expressed in
terms of ${}_0F_1$, equivalently a modified Bessel function.  Specializing to
$\lambda=23/2$, define the coefficients $h_p$ by the even Taylor expansion
\begin{equation}
\left(\frac{z}{\sinh z}\right)^c{}_0F_1\!\left(;b;\frac{\kappa^2z^2}{4}\right)=\sum_{p\geq0}h_pz^{2p}.
\label{eq:bessel-reduction}
\end{equation}
Combining the plane-wave expansion with the central-factorial generating
function then gives
\begin{equation}
 B^{26}_{n,j}={\cal P}_{n,j}h_\ell,
 \label{eq:coefficient-reduction}
\end{equation}
where the prefactor is
\begin{equation}
 {\cal P}_{n,j}=2^{-m-j}
 \frac{\Gamma(23/2)\kappa^j}{\Gamma(j+23/2)}>0.
 \label{eq:positive-prefactor}
\end{equation}
The definition of $h_p$ gives $h_0=1$.
For every allowed pair $(n,j)$,
\begin{equation}
 B^{26}_{n,j}\geq0
 \quad\Longleftrightarrow\quad
 h_\ell\geq0,
 \qquad
 \ell=\frac{n+1-j}{2} \geq 0.
\end{equation}
The critical-dimensional theorem is therefore reduced to proving
$h_\ell\geq0$ for all $\ell,j \geq 0$.

\emph{Bessel recurrence and positive bulk.} The generating-function
representation reduces the positivity problem to the analysis of its Taylor
coefficients.  Applying the differential equation satisfied by ${}_0F_1$ in
Eq.~\eqref{eq:bessel-reduction} and conjugating by $(z/\sinh z)^c$ yields the acyclic recurrence
\begin{equation}
 2p(2j+2p+23)h_p
 =\sum_{r=2}^{p+1}\frac{2^{2r-1}}{(2r)!}M_{p,r}h_{p+1-r}.
 \label{eq:recurrence}
\end{equation}
The positivity analysis therefore reduces to determining the signs of the
recurrence coefficients $M_{p,r}$, given by
\begin{equation}
\begin{split}
M_{p,r}=&c^2+4(r-1)c\ell+(4r^2-24r+21)c+2r(2r-1)\\
&-2(p+1-r)(2p+23-2r+2cr-4\ell).
\end{split}
\end{equation}
For $p\leq\ell-6$, reparametrize the allowed domain using the nonnegative
slack variables $X=\ell-p-6$, $Y=r-2$, and $Q=p-r+1$. This change of variables maps the allowed parameter domain to the nonnegative orthant.  Direct substitution expresses $M_{p,r}$ as a polynomial in $j,X,Y,Q$ with 20 strictly positive coefficients; the explicit polynomial is given in the SM.  Hence $M_{p,r}$ is manifestly positive throughout the bulk region shown in Fig.~\ref{fig:bulkcone}.  Since $h_0=1$, Eq.~\eqref{eq:recurrence} therefore proves
\begin{equation}
 h_p>0\qquad(0\leq p\leq\ell-6).
 \label{eq:bulk}
\end{equation}
Since ${\cal P}_{n,j}>0$, Eq.~\eqref{eq:coefficient-reduction} identifies
the sign of $B_{n,j}^{26}$ with that of the corresponding coefficient
$h_\ell$.  The remaining nonmanifest region consists of exactly six boundary coefficients, with a width independent of $n$ and $j$.  Thus, for
$\ell\geq6$, the termwise argument proves $h_p>0$ for every
$0\leq p\leq\ell-6$, leaving only
$h_{\ell-5},\ldots,h_\ell$ to be analyzed.
\begin{figure}[t]
\centering
\resizebox{0.98\columnwidth}{!}{%
\begin{tikzpicture}[x=0.55cm,y=0.34cm,font=\small]
  \fill[green!9] (0,0) rectangle (10.2,9.8);
  \fill[gray!11] (10.2,0) rectangle (12.3,9.8);
  \fill[pattern=north east lines,pattern color=gray!38]
        (10.2,0) rectangle (12.3,9.8);

  \foreach \x in {0.55,1.45,...,9.55}{
    \foreach \y in {0.55,1.55,...,9.55}{
      \fill[black!62] (\x,\y) circle[radius=0.82pt];
    }
  }

  \foreach \x in {10.38,10.73,11.08,11.43,11.78,12.13}{
    \foreach \y in {0.55,1.55,...,9.55}{
      \fill[black!62] (\x,\y) circle[radius=0.82pt];
    }
  }

  \draw[black!52,line width=0.45pt]
        (0,0) rectangle (12.3,9.8);
  \draw[dashed,green!42!black,line width=0.75pt]
        (10.2,0) -- (10.2,9.8);

  \node[
    align=center,
    fill=green!9,
    inner sep=3pt
  ] at (5.1,5.15)
    {\textbf{positive bulk}\\[-1pt]
     $p\leq\ell-6$};

  \node[
    align=center,
    anchor=south,
    font=\scriptsize
  ] at (11.25,10.55)
    {\textbf{six-step boundary strip}\\[-1pt]
     $p=\ell-5,\ldots,\ell$};

  \draw[black!45,line width=0.4pt]
        (11.25,10.45) -- (11.25,9.9);

  \node[rotate=90,text=black!75]
        at (-0.72,4.9) {spin $j$};
  \node[text=black!75]
        at (6.15,-0.82) {recurrence index $p$};
\end{tikzpicture}%
}
\caption{Schematic recurrence domain at fixed $\ell$.  The coefficients are
manifestly positive for $p\leq\ell-6$; only the six values
$p=\ell-5,\ldots,\ell$ lie in the nonmanifest boundary strip.  The
horizontal scale is compressed in the strip.}
\label{fig:bulkcone}
\end{figure}

\emph{Finite boundary.}
Let $d=\ell-p$ denote the depth of the coefficient
$h_p=h_{\ell-d}$ relative to the target $h_\ell$.  By
Eq.~\eqref{eq:bulk}, $h_{\ell-d}>0$ for every $6\leq d\leq\ell$.
It therefore remains to control the fixed boundary strip
$0\leq d\leq5$, within which the individual recurrence coefficients
need not have a definite sign.  Starting from the recurrence for
$h_\ell$, we recursively substitute only those terms that remain in
this strip and terminate each branch upon its first entry into the
positive bulk, at some depth $R\geq6$.  To formulate the expansion precisely, normalize a
recurrence edge by
\begin{equation}
 C_{p,r}=\frac{\alpha_rM_{p,r}}
 {2p(2j+2p+23)}.
 \label{eq:normalized-edge}
\end{equation}
For $\ell\geq6$, let $F_d$ be the total algebraic weight of paths
that reach depth $d$ without leaving the strip, and let $E_R$ be
the total weight of paths whose first strip-exit is to depth $R$.
Explicitly,
\begin{align}
 F_0&=1,\qquad
 F_d=\sum_{u=0}^{d-1}
 F_uC_{\ell-u,d-u+1},
 \quad d=1,\ldots,5,\notag\\
 E_R&=\sum_{u=0}^{5}
 F_uC_{\ell-u,R-u+1},
 \quad R=6,\ldots,\ell.
 \label{eq:path-weights}
\end{align}
Every recurrence step strictly increases $d$, so this expansion is
finite and exhaustive.  Summing over all possible first exits gives
the exact identity
\begin{equation}
 h_\ell=\sum_{R=6}^{\ell}E_Rh_{\ell-R}.
 \label{eq:first-exit}
\end{equation}
Each landing coefficient $h_{\ell-R}$ belongs to the positive bulk.
It therefore remains only to prove $E_R>0$ for
$6\leq R\leq\ell$.  The $F_d$ are intermediate algebraic weights
and need not be positive.
Table~\ref{tab:first-exit} displays the summation process in
Eqs.~\eqref{eq:path-weights} and \eqref{eq:first-exit}.

\begin{table*}[t]
\caption{\label{tab:first-exit}
Bookkeeping for the summation in Eq.~\eqref{eq:first-exit}. Read the table from top to bottom.  The cell
$(h_{\ell-a},h_{\ell-d})$, with $d>a$, is the contribution
$F_aC_{\ell-a,d-a+1}$ to the coefficient of $h_{\ell-d}$ obtained by
replacing $h_{\ell-a}$ with its recurrence relation.  The bottom entry of
each column is the resulting total coefficient $F_d$ of $h_{\ell-d}$, or
the exit weight $E_R$ in the final column.}
\centering
\begingroup
\small
\setlength{\tabcolsep}{3.2pt}
\renewcommand{\arraystretch}{1.15}
\begin{ruledtabular}
\begin{tabular}{l c c c c c c}
\multicolumn{1}{c}{
  \shortstack{coefficient already\\reached}
}
&
\multicolumn{5}{c}{
  one-step target inside the boundary strip
}
&
\multicolumn{1}{c}{
  \shortstack{first exit\\(fixed $R$)}
}
\\[-0.15em]
&
$h_{\ell-1}$
&
$h_{\ell-2}$
&
$h_{\ell-3}$
&
$h_{\ell-4}$
&
$h_{\ell-5}$
&
$h_{\ell-R}$
\\
\colrule
$h_\ell\ (a=0)$
&
$F_0C_{\ell,2}$
&
$F_0C_{\ell,3}$
&
$F_0C_{\ell,4}$
&
$F_0C_{\ell,5}$
&
$F_0C_{\ell,6}$
&
$F_0C_{\ell,R+1}$
\\
$h_{\ell-1}\ (a=1)$
&
\textemdash
&
$F_1C_{\ell-1,2}$
&
$F_1C_{\ell-1,3}$
&
$F_1C_{\ell-1,4}$
&
$F_1C_{\ell-1,5}$
&
$F_1C_{\ell-1,R}$
\\
$h_{\ell-2}\ (a=2)$
&
\textemdash
&
\textemdash
&
$F_2C_{\ell-2,2}$
&
$F_2C_{\ell-2,3}$
&
$F_2C_{\ell-2,4}$
&
$F_2C_{\ell-2,R-1}$
\\
$h_{\ell-3}\ (a=3)$
&
\textemdash
&
\textemdash
&
\textemdash
&
$F_3C_{\ell-3,2}$
&
$F_3C_{\ell-3,3}$
&
$F_3C_{\ell-3,R-2}$
\\
$h_{\ell-4}\ (a=4)$
&
\textemdash
&
\textemdash
&
\textemdash
&
\textemdash
&
$F_4C_{\ell-4,2}$
&
$F_4C_{\ell-4,R-3}$
\\
$h_{\ell-5}\ (a=5)$
&
\textemdash
&
\textemdash
&
\textemdash
&
\textemdash
&
\textemdash
&
$F_5C_{\ell-5,R-4}$
\\
\colrule
\multicolumn{1}{c}{column sum}
&
$F_1$
&
$F_2$
&
$F_3$
&
$F_4$
&
$F_5$
&
$E_R$
\\
\colrule
\multicolumn{7}{c}{
$\displaystyle
\text{all first exits:}\qquad
h_\ell=\sum_{R=6}^{\ell}E_Rh_{\ell-R}$
}
\end{tabular}
\end{ruledtabular}
\endgroup
\end{table*}

Using the normalized recurrence edge in
Eq.~\eqref{eq:normalized-edge} and the internal path weights $F_d$
defined in Eq.~\eqref{eq:path-weights}, the exit weights $E_R$ take
the explicit form
\begin{align}
 E_R&=\frac{\alpha_{R+1}\,
 \mathscr P(j,R-6,\ell-R)}{{\cal D}_{\ell,j}},
 \label{eq:certificate}\\
 {\cal D}_{\ell,j}
 &=958003200\prod_{u=0}^{5}(\ell-u)\notag\\[-0.2em]
 &\quad\times\prod_{u=0}^{5}(2j+2\ell+23-2u).
\end{align}
Here $\mathscr P\in\mathbb Z[J,X,Y]$ has multidegree
$(12,22,11)$, total degree 23, and exactly 1360 nonzero monomials.
Every coefficient is a positive integer.

Thus $\mathscr P$ is strictly positive on the closed nonnegative orthant,
all denominator factors in Eq.~\eqref{eq:certificate} are positive, and
$E_R>0$.  Equations \eqref{eq:bulk} and \eqref{eq:first-exit} imply
$h_\ell>0$ for every $\ell\geq6$.  The accompanying Mathematica notebook reconstructs the explicit
polynomial $\mathscr P$ directly from
Eqs.~\eqref{eq:recurrence} and \eqref{eq:path-weights}, and verifies that every coefficient is a
positive integer.

For the six low trajectories, the positive normalization
$L_\ell=\ell!(j+25/2)_\ell h_\ell$ gives
\begin{equation}
 \begin{aligned}
 L_0&=1,\qquad L_1=\frac{j(j-1)}{12},\\[-0.15em]
 720L_2&=5j^4+38j^3+23j^2\\[-0.15em]
 &\quad-210j+720.
 \end{aligned}
 \label{eq:low-ell}
\end{equation}
The remaining quadratic $23j^2-210j+720$ has negative discriminant, so $L_2>0$ for
$j\geq0$; $L_3$ has a positive falling-factorial expansion and $L_4,L_5$
have positive monomial expansions, given exactly in the SM.  Hence the only
extra zeros are those of $L_1$, namely $(n,j)=(1,0),(2,1)$.

\emph{Dimension descent and sharpness.}
For $0<\mu\leq\nu$, the connection formula
\begin{align}
 C_J^\nu(x)&=\sum_{q=0}^{\lfloor J/2\rfloor}
 d_{J,q}(\nu,\mu)C_{J-2q}^\mu(x),\notag\\[-0.2em]
 d_{J,q}&=\frac{(\nu-\mu)_q(\nu)_{J-q}(J-2q+\mu)}
 {\mu\,q!(\mu+1)_{J-q}}\geq0
 \label{eq:connection}
\end{align}
expresses every $D=26$ partial wave as a positive combination of those at
$\mu=(D-3)/2$.  This proves Eq.~\eqref{eq:theorem} for every real
$3<D\leq26$, in particular every integer spacetime dimension
$4\leq D\leq26$.  Conversely,
\begin{equation}
 \begin{aligned}
 R_1(x)&=\frac{25x^2-1}{8}\\[-0.25em]
 &=\frac{25C_2^{(D-3)/2}(x)}{4(D-3)(D-1)}
 +\frac{26-D}{8(D-1)}C_0(x).
 \end{aligned}
 \label{eq:sharpness}
\end{equation}
Therefore $B^D_{1,0}<0$ for every $D>26$.

\paragraph{Discussion.} The amplitude thus identifies the critical dimension in two complementary
ways.  The first massive scalar shows that positivity must fail for $D>26$,
whereas the Bessel recurrence proves uniformly that no higher pole or deeper
daughter trajectory strengthens this bound at $D=26$.  This result
complements, rather than replaces, the no-ghost theorem: the latter
establishes positivity by constructing the physical state space, while our
argument exhibits how the same conclusion is encoded in the analytic
structure of the beta function.

The finite polynomial certificate rigorously closes the recurrence, but
should not yet be regarded as a physical explanation of the remaining
boundary positivity.  A more transparent contour representation or
positive-operator formulation may connect this boundary directly to
factorization and physical-state norms.  Independently, the proof suggests a
concrete test for other dual-resonance amplitudes: derive a holonomic
recurrence, determine whether its nonmanifest region has uniformly bounded
width, and seek a positive representation of the resulting exit weights.
Applying this criterion to Coon and hypergeometric deformations
\cite{Figueroa:2022onw,Rigatos:2023asb} may clarify which ingredients are universal and which
are special to the critical bosonic amplitude.

\paragraph*{\bf Acknowledgements.} We thank Nima Arkani-Hamed and Bo Wang for useful discussions. This work was carried out with assistance from GPT-5.6 Sol. The work of Q.C. is supported by the NSFC Grant No. 12275273. The work of Y.Y. is supported by the KIAS New Generation Research Program under Grant No. PG105801. 

\bibliography{Veneziano}

\clearpage
\onecolumngrid

\setcounter{equation}{0}
\renewcommand{\theequation}{S\arabic{equation}}
\renewcommand{\theHequation}{supplement.\arabic{equation}}

\setcounter{section}{0}
\renewcommand{\thesection}{S\Roman{section}}
\renewcommand{\theHsection}{supplement.\arabic{section}}

\setcounter{lemma}{0}
\allowdisplaybreaks
\setlength{\emergencystretch}{2em}

\begin{center}
{\large\bfseries
Supplemental Material}

\end{center}

\vspace{1em}

This Supplemental Material supplies detailed derivations used in the
Letter.  The first two sections fix the residue sign and derive the exact
Gegenbauer--Bessel reduction.  The next two derive the triangular recurrence
and its positive bulk.  The first-exit construction and finite exact
certificate are then proved, followed by the six exceptional trajectories,
dimension descent, and sharpness.

\section{Physical residue and central factorials}
\label{sec:residue}

Consider the unsigned quotient
\begin{equation}
 {\cal A}(s,t)=
 \frac{\Gamma(-s-1)\Gamma(-t-1)}{\Gamma(-s-t-2)}.
 \label{eqS:amplitude}
\end{equation}
At $s=n$, where $n=-1,0,1,\ldots$, set $m=n+1\geq0$.  Since
\begin{equation}
 \Res_{s=n}\Gamma(-s-1)=\frac{(-1)^{m+1}}{m!},
 \qquad
 \frac{\Gamma(-t-1)}{\Gamma(-t-m-1)}=(-1)^m(t+2)_m,
\end{equation}
the complex-analytic residue is
\begin{equation}
 \Res_{s=n}{\cal A}(s,t)=-\frac{(t+2)_m}{m!}.
\end{equation}
The polynomial that multiplies the physical propagator $(n-s)^{-1}$ is
therefore
\begin{equation}
 R_n(t):=-\Res_{s=n}{\cal A}(s,t)=\frac{(t+2)_m}{m!}.
 \label{eqS:physical-residue}
\end{equation}
This sign convention is used throughout.

At the pole, write
\begin{equation}
 t=\frac{n+4}{2}(x-1),\qquad x=\cos\theta,
 \qquad \kappa=m+3=n+4.
\end{equation}
Define the central-factorial polynomial
\begin{equation}
 Q_m(y)=\prod_{a=-(m-1)/2}^{(m-1)/2}(y-a),
 \label{eqS:central-factorial}
\end{equation}
where factors advance in unit steps and $Q_0=1$.  Equation
\eqref{eqS:physical-residue} becomes
\begin{equation}
 R_n(x)=\frac1{m!}Q_m\!\left(\frac{\kappa x}{2}\right).
 \label{eqS:central-residue}
\end{equation}
Consequently $R_n(-x)=(-1)^mR_n(x)$, which already gives the parity
selection rule.

\begin{lemma}[Central-factorial generating function]
For every complex $y$,
\begin{equation}
 \sum_{m\geq0}\frac{Q_m(y)}{m!}u^m
 =\frac{\exp\!\bigl(2y\arsinh(u/2)\bigr)}{\sqrt{1+u^2/4}}
 \label{eqS:central-gf}
\end{equation}
as a formal series at $u=0$.
\end{lemma}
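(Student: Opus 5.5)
We will prove the identity by showing that the right-hand side of Eq.~\eqref{eqS:central-gf}, call it $G(u,y)=\exp\!\bigl(2y\arsinh(u/2)\bigr)/\sqrt{1+u^2/4}$, has Taylor coefficients $g_m(y)=[u^m]G$ satisfying the same characterization as $Q_m(y)/m!$. Since both sides are, coefficientwise, polynomials in $y$, it suffices to prove the identity for infinitely many $y$, or to match a recurrence in $m$ together with initial data. We plan to use the substitution $u=2\sinh w$, so that $\arsinh(u/2)=w$, $\sqrt{1+u^2/4}=\cosh w$ and $du=2\cosh w\,dw$. This turns the coefficient extraction into a contour integral that cancels the square-root factor exactly:
\begin{equation}
 g_m(y)=\frac{1}{2\pi i}\oint\frac{e^{2yw}}{\cosh w}\,\frac{du}{u^{m+1}}
 =\frac{1}{2\pi i}\oint\frac{2\,e^{2yw}\,dw}{(2\sinh w)^{m+1}} .
\end{equation}
Here the contour is a small circle around $w=0$, which corresponds to a small circle around $u=0$, because the map $w\mapsto 2\sinh w$ is locally biholomorphic with derivative $2$ at the origin.

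Next we write $2\sinh w=e^{w}(1-e^{-2w})$ and set $v=1-e^{-2w}$, so that $dv=2e^{-2w}dw$. The integrand becomes
\begin{equation}
 \frac{2e^{2yw}\,dw}{e^{(m+1)w}v^{m+1}}=\frac{e^{(2y-m+1)w}\,dv}{v^{m+1}}
 =\frac{(1-v)^{-(2y-m+1)/2}\,dv}{v^{m+1}} .
\end{equation}
Taking the residue at $v=0$ with the binomial series gives
\begin{equation}
 g_m(y)=[v^m](1-v)^{-(y-(m-1)/2)}=\frac{\bigl(y-\tfrac{m-1}{2}\bigr)_m}{m!}.
\end{equation}
The rising factorial $\bigl(y-\tfrac{m-1}{2}\bigr)_m=\prod_{k=0}^{m-1}\bigl(y-\tfrac{m-1}{2}+k\bigr)$ runs over exactly the factors $y-a$ for $a=\tfrac{m-1}{2},\tfrac{m-3}{2},\ldots,-\tfrac{m-1}{2}$, which is $Q_m(y)$ in Eq.~\eqref{eqS:central-factorial}. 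The case $m=0$ gives $1=Q_0$.

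The only delicate step is justifying the two changes of variables in the formal-series setting. We would handle this by noting that for fixed complex $y$, $G$ is analytic near $u=0$, so Cauchy's formula applies. Each substitution is a local biholomorphism fixing the origin: $w\mapsto v=1-e^{-2w}$ has derivative $2$ at $0$. Hence small circles map to closed curves winding once around $0$, and residues are preserved. The branch of $(1-v)^{-s}$ is the principal one near $v=0$, consistent with $e^{sw}$ being $1$ at $w=0$.

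Alternatively, one could avoid analysis entirely via Lagrange inversion. We would mention this but use the contour version, since for $y$ complex both sides are entire in $y$ coefficientwise and the identity holds pointwise.
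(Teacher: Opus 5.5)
Your proposal is correct and follows essentially the paper's proof: the substitution $u=2\sinh z$ cancels $\sqrt{1+u^2/4}$ against the Jacobian, and an exponential change of variables then reduces the residue to a binomial coefficient. The only cosmetic difference is that the paper uses $w=e^{2z}$ and reads off $\binom{y+(m-1)/2}{m}$ from the residue at $w=1$, whereas your $v=1-e^{-2w}$ gives the equivalent rising factorial $\bigl(y-\tfrac{m-1}{2}\bigr)_m/m!$ from the residue at $v=0$.
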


\begin{proof}
Set $u=2\sinh z$ and then $w=e^{2z}$.  Coefficient extraction gives
\begin{align}
 [u^m]\frac{e^{2y\arsinh(u/2)}}{\sqrt{1+u^2/4}}
 &=2^{-m}\Res_{z=0}\frac{e^{2yz}}{\sinh^{m+1}z}\,dz\notag\\
 &=\Res_{w=1}\frac{w^{y+(m-1)/2}}{(w-1)^{m+1}}\,dw\notag\\
 &=\binom{y+(m-1)/2}{m}=\frac{Q_m(y)}{m!}.
\end{align}
The identity follows coefficient by coefficient.
\end{proof}

Combining Eqs.~\eqref{eqS:central-residue} and \eqref{eqS:central-gf},
\begin{equation}
 R_n(x)=[u^m]\frac{
 \exp\!\bigl(\kappa x\arsinh(u/2)\bigr)}{\sqrt{1+u^2/4}}.
 \label{eqS:residue-u}
\end{equation}

\section{Exact Gegenbauer--Bessel reduction}
\label{sec:bessel}

For $\lambda>0$, the branch-free Gegenbauer plane-wave identity is
\begin{equation}
 e^{ax}=\sum_{j\geq0}
 \frac{\Gamma(\lambda)}{2^j\Gamma(j+\lambda)}a^j
 {}_0F_1\!\left(;j+\lambda+1;\frac{a^2}{4}\right)
 C_j^\lambda(x).
 \label{eqS:plane-wave}
\end{equation}
It follows either by Gegenbauer projection or by inserting the series of
$I_{j+\lambda}(a)$ into
\begin{equation}
 e^{ax}=2^\lambda\Gamma(\lambda)
 \sum_{j\geq0}(j+\lambda)a^{-\lambda}I_{j+\lambda}(a)C_j^\lambda(x).
\end{equation}

At $D=26$, $\lambda=23/2$.  Put $z=\arsinh(u/2)$ in
Eq.~\eqref{eqS:residue-u}, insert Eq.~\eqref{eqS:plane-wave}, and project
onto $C_j^{23/2}$.  This gives
\begin{equation}
 B^{26}_{n,j}=
 \frac{\Gamma(23/2)\kappa^j}{2^j\Gamma(j+23/2)}
 [u^m]\frac{z^j}{\cosh z}
 {}_0F_1\!\left(;j+\frac{25}{2};\frac{\kappa^2z^2}{4}\right).
 \label{eqS:B-u}
\end{equation}
In the coefficient contour, $u=2\sinh z$ and
$du=2\cosh z\,dz$, so the Jacobian cancels $1/\cosh z$.  Hence
\begin{equation}
 B^{26}_{n,j}=2^{-m-j}
 \frac{\Gamma(23/2)\kappa^j}{\Gamma(j+23/2)}
 [z^{m-j}]\left(\frac z{\sinh z}\right)^{m+1}
 {}_0F_1\!\left(;j+\frac{25}{2};\frac{\kappa^2z^2}{4}\right).
 \label{eqS:B-z}
\end{equation}

Both $Q_m$ and $C_j^{23/2}$ have the parity of their degree.  Thus the
coefficient vanishes if $j>m$ or $j\not\equiv m\pmod2$.  In the allowed
sector define
\begin{equation}
 \ell=\frac{m-j}{2}\in\mathbb N_0,\qquad
 c=m+1=j+2\ell+1,\qquad b=j+\frac{25}{2},\qquad \kappa=c+2.
 \label{eqS:parameters}
\end{equation}
Let
\begin{equation}
 H_{\ell,j}(z)=
 \left(\frac z{\sinh z}\right)^c
 {}_0F_1\!\left(;b;\frac{\kappa^2z^2}{4}\right)
 =\sum_{p\geq0}h_pz^{2p}.
 \label{eqS:H}
\end{equation}
Then $h_0=1$ and
\begin{equation}
 \boxed{
 B^{26}_{n,j}=2^{-m-j}
 \frac{\Gamma(23/2)\kappa^j}{\Gamma(j+23/2)}h_\ell.}
 \label{eqS:B-exact}
\end{equation}
The prefactor is strictly positive.  As checks,
\begin{equation}
 B^{26}_{-1,0}=1,\quad B^{26}_{0,1}=\frac2{23},\quad
 B^{26}_{1,2}=\frac1{92},\quad B^{26}_{2,3}=\frac1{575}.
\end{equation}

\section{Conjugated Bessel equation and recurrence}
\label{sec:recurrence}

Write
\begin{equation}
 \Phi(z)={}_0F_1\!\left(;b;\frac{\kappa^2z^2}{4}\right).
\end{equation}
Its hypergeometric equation is
\begin{equation}
 \Phi''+\frac{2b-1}{z}\Phi'-\kappa^2\Phi=0.
 \label{eqS:Phi-ode}
\end{equation}
Since $\Phi=(\sinh z/z)^cH$, introduce
$\varphi(z)=\coth z-z^{-1}$.  Conjugating Eq.~\eqref{eqS:Phi-ode} gives
\begin{equation}
 H''+\left(2c\varphi+\frac{2b-1}{z}\right)H'=K(z)H,
 \label{eqS:H-ode}
\end{equation}
where
\begin{equation}
 K(z)=\kappa^2-c\varphi'-c^2\varphi^2
 -(2b-1)c\frac{\varphi}{z}.
\end{equation}
Multiplication by $z^2\sinh^2z$ yields
\begin{align}
 z^2\sinh^2z\,H''
 &+\bigl(2cz^2\sinh z\cosh z
 +(22-4\ell)z\sinh^2z\bigr)H'=N(z)H,
 \label{eqS:multiplied-ode}\\
 N(z)&=\kappa^2z^2\sinh^2z-c^2z^2\cosh^2z+cz^2\notag\\
 &\quad+c(2c-2b+1)z\sinh z\cosh z
 +c(2b-c-2)\sinh^2z.
 \label{eqS:N}
\end{align}

Set
\begin{equation}
 \alpha_r=\frac{2^{2r-1}}{(2r)!}.
\end{equation}
The elementary expansions
\begin{equation}
 \sinh^2z=\sum_{r\geq1}\alpha_rz^{2r},\qquad
 z\sinh z\cosh z=\sum_{r\geq1}r\alpha_rz^{2r},\qquad
 \alpha_{r-1}=\frac{r(2r-1)}2\alpha_r
\end{equation}
show that the coefficient of $z^2$ in $N$ cancels and
\begin{align}
 N(z)&=\sum_{r\geq2}\alpha_r{\cal K}_rz^{2r},\notag\\
 {\cal K}_r&=c^2+4(r-1)c\ell
 +(4r^2-24r+21)c+2r(2r-1).
 \label{eqS:Kr}
\end{align}
Extracting $[z^{2(p+1)}]$ from Eq.~\eqref{eqS:multiplied-ode} gives the
following recurrence.

\begin{proposition}[Triangular recurrence]
For every $p\geq1$,
\begin{equation}
 2p(2j+2p+23)h_p
 =\sum_{r=2}^{p+1}\alpha_rM_{p,r}h_{p+1-r},
 \label{eqS:recurrence}
\end{equation}
where, with $k=p+1-r$,
\begin{equation}
 M_{p,r}={\cal K}_r-2k(2k-1+2cr+22-4\ell).
 \label{eqS:M-compact}
\end{equation}
Equivalently, setting $\delta=\ell-p$,
\begin{align}
 M_{p,r}={}&(8r-4)\delta^2+(16r^2-60r+50)\delta+j^2\notag\\
 &+(4\delta r+8r^2-28r+23)j\notag\\
 &+(8\delta r+16r^2-56r)p+8r^2+20r-24.
 \label{eqS:M-expanded}
\end{align}
\end{proposition}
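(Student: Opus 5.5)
We extract the coefficient of $z^{2(p+1)}$ from both sides of Eq.~\eqref{eqS:multiplied-ode}, using $H=\sum_{k\geq0}h_kz^{2k}$ and the series for $\sinh^2z$, $z\sinh z\cosh z$ in terms of $\alpha_r$. The term-by-term products are
\begin{align*}
z^2\sinh^2z\,H''&=\sum_{r\geq1}\sum_{k}\alpha_r\,2k(2k-1)h_kz^{2r+2k},\\
2cz^2\sinh z\cosh z\,H'&=\sum_{r\geq1}\sum_k 2cr\alpha_r\,2k\,h_kz^{2r+2k},\\
(22-4\ell)z\sinh^2z\,H'&=\sum_{r\geq1}\sum_k(22-4\ell)\alpha_r\,2k\,h_kz^{2r+2k}.
\end{align*}
Combining them, the left side contributes $\sum_{r\geq1}\alpha_r\,2k(2k-1+2cr+22-4\ell)h_k$ with $k=p+1-r$. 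The right side $N(z)H$ contributes $\sum_{r\geq2}\alpha_r\mathcal K_rh_{p+1-r}$ by Eq.~\eqref{eqS:Kr}, whose cancellation of the $z^2$ term we take as given.

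The key separation is the $r=1$ term on the left. It has $k=p$ and coefficient $\alpha_1\,2p(2p-1+2c+22-4\ell)$ with $\alpha_1=1$. Substituting $c=j+2\ell+1$ gives $2c-4\ell=2j+2$, so this coefficient becomes $2p(2j+2p+23)$. Moving all $r\geq2$ left-side terms to the right yields Eq.~\eqref{eqS:recurrence} with $M_{p,r}=\mathcal K_r-2k(2k-1+2cr+22-4\ell)$, i.e.\ Eq.~\eqref{eqS:M-compact}. The upper limit $r\leq p+1$ comes from $k\geq0$. Terms with $k=0$ drop out of the left side automatically because of the factor $2k$, which is consistent.

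It remains to verify Eq.~\eqref{eqS:M-expanded}. This is a polynomial identity in $(j,\ell,p,r)$. We substitute $c=j+2\ell+1$, $k=p+1-r$, $\ell=\delta+p$ into Eq.~\eqref{eqS:M-compact}, expand, and compare monomials in $\delta,j,p,r$. We would organize the check by degree in $\delta$:
\begin{itemize}
\item the $\delta^2$ coefficient comes from $4(r-1)c\ell$ and $-2k\cdot(-4\ell)$;
\item the $j\delta$ and $j^2$ terms come from $c^2$ and $4(r-1)c\ell$;
\item the remaining terms are checked by brute expansion.
\end{itemize}
The main obstacle is simply bookkeeping in this expansion, notably checking that the $p^2$ terms cancel, since Eq.~\eqref{eqS:M-expanded} is only linear in $p$. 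As an independent sanity check, we would confirm $h_1$: the recurrence at $p=1$, $r=2$, $k=0$ gives $2(2j+25)h_1=\alpha_2\mathcal K_2$. We would compare this with the direct Taylor expansion of $H$ at order $z^2$, namely $h_1=-c/6+\kappa^2/(4b)$, and with $L_1=j(j-1)/12$.
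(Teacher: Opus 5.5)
Your proposal is correct and follows the paper's proof essentially verbatim: extract $[z^{2(p+1)}]$ from the multiplied ODE, isolate the $r=1$, $k=p$ term (with $\alpha_1=1$ and $2c-4\ell=2j+2$) to get $2p(2j+2p+23)h_p$, move the $r\geq2$ terms to the right to read off $M_{p,r}$, and obtain the expanded form by substituting $c=j+2\ell+1$, $\ell=\delta+p$. One small bookkeeping correction for when you carry out that expansion: the $\delta^2$ coefficient $8r-4$ comes from $c^2$ (contributing $4$) and $4(r-1)c\ell$ (contributing $8r-8$), not from $-2k(-4\ell)=8k\ell$, which is at most linear in $\delta$ because $k=p+1-r$ carries no $\delta$.
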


\begin{proof}
In the summand indexed by $r$, the index of $h$ is $k=p+1-r$.
The $r=1$, $k=p$ derivative contribution is
\begin{equation}
 2p\bigl((2p-1)+2c+22-4\ell\bigr)h_p
 =2p(2j+2p+23)h_p.
\end{equation}
Moving every term with $r\geq2$ to the right gives
Eq.~\eqref{eqS:M-compact}; substituting $c=j+2\ell+1$ and
$\delta=\ell-p$ gives Eq.~\eqref{eqS:M-expanded}.
\end{proof}

\section{Manifestly positive bulk}
\label{sec:bulk}

Suppose $p\leq\ell-6$.  For every term in
Eq.~\eqref{eqS:recurrence}, write
\begin{equation}
 \delta=6+X,\qquad r=2+Y,\qquad p=r-1+Q,\qquad X,Y,Q\geq0.
 \label{eqS:bulk-shift}
\end{equation}
Substitution in Eq.~\eqref{eqS:M-expanded} gives the complete polynomial
\begin{align}
M_{p,r}={}&j^2+4jXY+8jX+8jY^2+28jY+47j\notag\\
&+8X^2Y+12X^2+24XY^2+8XYQ+124XY+16XQ+154X\notag\\
&+16Y^3+16Y^2Q+176Y^2+56YQ+468Y+48Q+492.
\label{eqS:M-positive}
\end{align}
All 20 coefficients are positive.  Since $\alpha_r>0$ and the factor on
the left of Eq.~\eqref{eqS:recurrence} is positive, induction from
$h_0=1$ proves
\begin{equation}
 h_p>0\qquad(0\leq p\leq\ell-6).
 \label{eqS:bulk-positive}
\end{equation}

\section{First-exit identity and exact boundary certificate}
\label{sec:boundary}

For $p\geq1$ define the edge weight
\begin{equation}
 C_{p,r}=\frac{\alpha_rM_{p,r}}{2p(2j+2p+23)}.
 \label{eqS:C}
\end{equation}
For $0\leq d\leq5$, set
\begin{equation}
 F_0=1,\qquad
 F_d=\sum_{u=0}^{d-1}F_uC_{\ell-u,d-u+1}\quad(1\leq d\leq5).
 \label{eqS:F}
\end{equation}
Thus $F_d$ is the sum of the products of edge weights over all recurrence
paths from $h_\ell$ to $h_{\ell-d}$ that remain at cumulative decrement
less than six.  Individual weights and the totals $F_d$ need not be
positive.  For $6\leq R\leq\ell$, define
\begin{equation}
 E_R=\sum_{u=0}^{5}F_uC_{\ell-u,R-u+1}.
 \label{eqS:E}
\end{equation}

\begin{lemma}[First-exit identity]
For $\ell\geq6$,
\begin{equation}
 h_\ell=\sum_{R=6}^{\ell}E_Rh_{\ell-R}.
 \label{eqS:first-exit}
\end{equation}
\end{lemma}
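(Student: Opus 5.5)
The plan is to prove the first-exit identity by an invariant argument: I will carry out the expansion inside the strip one column at a time and show that a certain linear representation of $h_\ell$ stays exact. First I restate Eq.~\eqref{eqS:recurrence} in normalized form using the edge weights \eqref{eqS:C}. For every $q\geq1$ we have $h_q=\sum_{r=2}^{q+1}C_{q,r}h_{q+1-r}$. Writing $q=\ell-u$ and $q+1-r=\ell-d$, this reads
\begin{equation}
 h_{\ell-u}=\sum_{d=u+1}^{\ell}C_{\ell-u,d-u+1}\,h_{\ell-d},
\end{equation}
valid whenever $\ell-u\geq1$. Because $\ell\geq6$, this holds for every $u=0,\dots,5$, and the denominator $2p(2j+2p+23)$ never vanishes.

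Next I introduce, for $s=0,1,\dots,6$, the statement
\begin{equation}
 h_\ell=\sum_{d=s}^{5}G^{(s)}_d h_{\ell-d}+\sum_{R=6}^{\ell}E^{(s)}_R h_{\ell-R}.
\end{equation}
The coefficients are defined by $G^{(s)}_d=\sum_{u<s}F_uC_{\ell-u,d-u+1}$ for $d\geq s$, with the convention that for $s\le d$ one adds $F_0$ when $d=0$. Similarly, $E^{(s)}_R=\sum_{u<s}F_uC_{\ell-u,R-u+1}$. The statement is trivially true at $s=0$, where $G^{(0)}_0=F_0=1$ and all other coefficients vanish.

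The inductive step uses the key observation that $G^{(s)}_s=F_s$. This is immediate from \eqref{eqS:F}, since the sum defining $F_s$ runs exactly over $u<s$. I then substitute the normalized recurrence for $h_{\ell-s}$ into the term $F_sh_{\ell-s}$. This distributes $F_sC_{\ell-s,d-s+1}$ onto each $h_{\ell-d}$ with $d>s$, some landing inside the strip and some at $d=R\geq6$. The effect is to increase the upper limit $u<s$ to $u<s+1$ in both $G$ and $E$. This is precisely the column-by-column bookkeeping of Table~\ref{tab:first-exit}.

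At $s=6$ the strip sum is empty, and $E^{(6)}_R$ coincides with \eqref{eqS:E}, which proves \eqref{eqS:first-exit}. The only points needing care are index ranges. I must check that the upper limit $d\leq\ell$ in the recurrence matches $R\leq\ell$, and that $C_{\ell-u,R-u+1}$ is interpreted as zero when $R-u+1>\ell-u+1$; that case cannot occur, since $R\leq\ell$. I must also verify that no term is double counted, which follows because each step strictly increases $d$. I expect no real obstacle: the argument is a finite triangular (acyclic) elimination, and the main work is keeping the index conventions consistent.
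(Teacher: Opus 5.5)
Your argument is correct and is essentially the paper's proof. The paper phrases it as a decomposition of all substitution paths in the finite acyclic recurrence graph into disjoint classes labeled by the first-exit depth $R$; you carry out the same expansion as an explicit column-by-column elimination with the invariant $h_\ell=\sum_{d=s}^{5}G^{(s)}_d h_{\ell-d}+\sum_{R=6}^{\ell}E^{(s)}_R h_{\ell-R}$, which is Table~\ref{tab:first-exit} turned into an induction on $s$ and makes the index ranges and the absence of double counting automatic rather than asserted.
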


\begin{proof}
Every edge in Eq.~\eqref{eqS:recurrence} strictly lowers the index of $h$,
so the substitution graph is finite and acyclic.  Repeatedly substitute
only while the cumulative decrement is less than six, and stop a path on
its first exit.  Equation~\eqref{eqS:F} sums every internal path to depth
$u$, while Eq.~\eqref{eqS:E} appends every possible first-exit edge to
depth $R$.  The classes labeled by $R=6,\ldots,\ell$ are disjoint and
exhaust all paths, proving Eq.~\eqref{eqS:first-exit}.
\end{proof}

To determine the signs of $E_R$, use
\begin{equation}
 \frac{\alpha_{R-u+1}}{\alpha_{R+1}}
 =4^{-u}\prod_{v=0}^{2u-1}(2R+2-v).
 \label{eqS:alpha-ratio}
\end{equation}
Exact rational simplification of the finite sums
\eqref{eqS:F}--\eqref{eqS:E} then gives
\begin{equation}
 E_R=\frac{\alpha_{R+1}\,
 \mathscr P(j,R-6,\ell-R)}{
 958003200\prod_{u=0}^{5}(\ell-u)
 \prod_{u=0}^{5}(2j+2\ell+23-2u)}.
 \label{eqS:E-certificate}
\end{equation}
This equation may equivalently be taken as the definition of
$\mathscr P(J,X,Y)$: construct the left side from
Eqs.~\eqref{eqS:C}--\eqref{eqS:E}, clear the displayed denominator, and
make the bijective change of variables
\begin{equation}
 j=J,\qquad R=X+6,\qquad \ell=X+Y+6.
 \label{eqS:orthant}
\end{equation}
Thus $j\geq0$ and $6\leq R\leq\ell$ correspond exactly to
$J,X,Y\geq0$.

\begin{proposition}[Finite boundary certificate]
The polynomial defined above satisfies
\begin{align}
 &\mathscr P\in\mathbb Z[J,X,Y],\qquad
 (\deg_J,\deg_X,\deg_Y,\deg_{\mathrm{tot}})=(12,22,11,23),\notag\\
 &|\operatorname{supp}\mathscr P|=1360,\qquad
 [J^aX^bY^c]\mathscr P>0
 \quad\text{for every }(a,b,c)\in\operatorname{supp}\mathscr P,
 \label{eqS:P-data}
\end{align}
with minimum coefficient $16384$ and constant coefficient
\begin{equation}
 [J^0X^0Y^0]\mathscr P
 =1319504646933817344000000.
 \label{eqS:P-constant}
\end{equation}
Consequently $E_R>0$ for every $j\geq0$ and $6\leq R\leq\ell$.
\end{proposition}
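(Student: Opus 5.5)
The statement is a finite, exact computational claim, so the plan is to construct $\mathscr P$ symbolically and inspect it. Everything is a rational function of the three symbols $j,\ell,R$. This is because each edge weight $C_{\ell-u,s}$ entering Eqs.~\eqref{eqS:F}--\eqref{eqS:E} has a fixed second index $s\le 6$ inside the strip; it is then a rational function of $(j,\ell)$ with denominator $2(\ell-u)(2j+2\ell+23-2u)$. Only the exit edges $C_{\ell-u,R-u+1}$ involve $R$, and they do so polynomially: $M_{p,r}$ is polynomial in $r$ by Eq.~\eqref{eqS:M-expanded}. The remaining factor $\alpha_{R-u+1}$ is handled by Eq.~\eqref{eqS:alpha-ratio}, which shows that $\alpha_{R-u+1}/\alpha_{R+1}$ is a polynomial in $R$ of degree $2u$.

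\textbf{Construction.} First compute $F_1,\dots,F_5$ by the triangular recursion \eqref{eqS:F}. Each $F_d$ has denominator dividing $\prod_{u=0}^{d-1}(\ell-u)(2j+2\ell+23-2u)$, times a numerical factor collected from the $\alpha_s$ and the factors of $2$. Then form
$$E_R/\alpha_{R+1}=\sum_{u=0}^{5}F_u\,\frac{\alpha_{R-u+1}}{\alpha_{R+1}}\,\frac{M_{\ell-u,R-u+1}}{2(\ell-u)(2j+2\ell+23-2u)}.$$
Multiplying by ${\cal D}_{\ell,j}$ should clear all denominators. One must verify that the constant $958003200$ is a common multiple of the numerical denominators that arise; I would confirm this by checking that the product has integer coefficients after expansion. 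Finally substitute $j=J$, $R=X+6$, $\ell=X+Y+6$. This yields $\mathscr P(J,X,Y)$ as a polynomial, and the claims $\mathscr P\in\mathbb Z[J,X,Y]$, the degrees $(12,22,11,23)$, the support size $1360$, the minimum coefficient and the constant term are read off the expanded form. The degree count gives a consistency check. Each exit term contributes degree $2u$ from the $\alpha$-ratio plus at most $3$ from $M$, and the $F_u$ together with the cleared denominators raise the degree in $\ell$. Hence the total degree in $X$ is dominated by the $u=5$ term: $10$ from the $\alpha$-ratio, combined with the $\ell$-dependence.

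\textbf{Positivity.} Once every coefficient is known to be a positive integer, $\mathscr P\ge [J^0X^0Y^0]\mathscr P>0$ on the closed nonnegative orthant. Moreover $\alpha_{R+1}>0$, and every factor $\ell-u$ and $2j+2\ell+23-2u$ in ${\cal D}_{\ell,j}$ is positive for $\ell\ge6$, $u\le5$. Hence $E_R>0$ for all $j\ge0$ and $6\le R\le\ell$, since this range corresponds bijectively to $J,X,Y\ge0$ under Eq.~\eqref{eqS:orthant}.

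\textbf{Main obstacle.} The main obstacle is the expansion itself and the fact that positivity of coefficients is not structurally guaranteed. The $F_d$ carry sign-indefinite terms, because the $M_{\ell-u,s}$ with small $\delta$ are not manifestly positive. Only after summation and the specific shift $R=X+6$, $\ell-R=Y$ do all cancellations resolve into positive coefficients. A hand proof is therefore infeasible, and I would rely on exact computer algebra (the accompanying notebook) to expand and inspect all 1360 coefficients.

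If some coefficient had turned out negative, I would first try alternative orthant shifts, for instance $j=J+j_0$. Failing that, I would treat small $J$ or small $Y$ separately by direct verification, so that a positive-coefficient certificate survives on the remaining cone.
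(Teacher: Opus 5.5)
Your proposal is correct and is essentially the paper's own argument. The paper also builds $\mathscr P$ by exact rational simplification of the finite sums for $F_d$ and $E_R$ under the substitution $j=J$, $R=X+6$, $\ell=X+Y+6$. It then certifies integrality, the degrees, the support size and coefficientwise positivity with an exact-arithmetic program, without scanning $j,R,\ell$. Finally, it deduces $E_R>0$ from $\mathscr P\geq[J^0X^0Y^0]\mathscr P>0$ on the closed orthant, together with positivity of $\alpha_{R+1}$ and of every factor of ${\cal D}_{\ell,j}$.

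A small sharpening of your degree check: inside the strip $\delta=u$ and the step size are fixed numbers, so each internal $M_{\ell-u,s}$ is only quadratic in $(j,\ell)$. Consequently every path, after multiplication by ${\cal D}_{\ell,j}$, contributes at most degree $10$ from its internal numerators and leftover denominator factors. The exit contributes $2u$ from the $\alpha$-ratio and $3$ from the exit $M$, which gives the stated total degree $23$ at $u=5$.
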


\begin{proof}
The algebraic assertions in Eqs.~\eqref{eqS:P-data} and
\eqref{eqS:P-constant} are verified by the standalone exact-arithmetic
program described below.  They are a finite proof object: no value of
$j,R,$ or $\ell$ is scanned.  Coefficientwise positivity together with
the nonzero constant gives, on the entire closed orthant,
\begin{equation}
 \mathscr P(J,X,Y)\geq
 1319504646933817344000000>0.
\end{equation}
This explicit constant is needed at the corner $J=X=Y=0$, corresponding
to $(j,R,\ell)=(0,6,6)$.  Every factor in the denominator of
Eq.~\eqref{eqS:E-certificate} is strictly positive on the stated domain,
so $E_R>0$.
\end{proof}

Combining Eq.~\eqref{eqS:first-exit}, the proposition, and
Eq.~\eqref{eqS:bulk-positive} proves $h_\ell>0$ for every $\ell\geq6$.
The endpoint is included: $h_6=E_6h_0>0$.

\section{The six exceptional trajectories}
\label{sec:low}

For $0\leq\ell\leq5$, define the positive normalization
\begin{equation}
 L_\ell(j)=\ell!\left(j+\frac{25}{2}\right)_\ell h_\ell.
 \label{eqS:L}
\end{equation}
Direct exact expansion of Eq.~\eqref{eqS:H} gives
\begin{equation}
 L_0=1,\qquad L_1=\frac{j(j-1)}{12},
 \label{eqS:L01}
\end{equation}
and
\begin{equation}
 L_2=\frac{5j^4+38j^3+23j^2-210j+720}{720}.
 \label{eqS:L2}
\end{equation}
The last three terms in the numerator of $L_2$ form a quadratic with
discriminant $-22140$ and positive leading coefficient.  Together with
$5j^4+38j^3\geq0$, this proves $L_2>0$ for $j\geq0$.

Write $j^{\underline r}=j(j-1)\cdots(j-r+1)$ and
$j^{\underline0}=1$.  For integer $j\geq0$, every such falling factorial
is nonnegative.  The next trajectory is
\begin{align}
12^3L_3={}&\frac{139968}{5}
+\frac{98496}{35}j^{\underline1}
+\frac{136296}{35}j^{\underline2}
+\frac{19608}{7}j^{\underline3}\notag\\
&+\frac{19332}{35}j^{\underline4}
+\frac{204}{5}j^{\underline5}
+j^{\underline6},
\label{eqS:L3}
\end{align}
which is strictly positive.  Finally,
\begin{align}
3628800L_4={}&175j^8+9380j^7+204094j^6+2264120j^5
+12964735j^4\notag\\
&+33558740j^3+53851476j^2+460077840j+1746230400,
\label{eqS:L4}
\end{align}
and
\begin{align}
95800320L_5={}&385j^{10}+35035j^9+1379774j^8+30585478j^7
+414681881j^6\notag\\
&+3490627963j^5+17801852152j^4+56139102372j^3\notag\\
&+183464019504j^2+915513553152j+2373120820224.
\label{eqS:L5}
\end{align}
All coefficients in Eqs.~\eqref{eqS:L4} and \eqref{eqS:L5} are positive.
Since the normalization in Eq.~\eqref{eqS:L} is positive, $h_\ell\geq0$
for $0\leq\ell\leq5$.  The only zeros are $L_1(0)=L_1(1)=0$, namely
\begin{equation}
 (\ell,j)=(1,0),(1,1)
 \quad\Longleftrightarrow\quad
 (n,j)=(1,0),(2,1).
\end{equation}
Together with the boundary argument, this proves all $D=26$ coefficients,
including the tachyon $B^{26}_{-1,0}=1$.

\section{Positive dimension descent and sharpness}
\label{sec:descent}

\begin{lemma}[Positive Gegenbauer connection]
If $0<\mu\leq\nu$, then
\begin{equation}
 C_J^\nu(x)=\sum_{q=0}^{\lfloor J/2\rfloor}
 d_{J,q}(\nu,\mu)C_{J-2q}^\mu(x),
 \label{eqS:connection}
\end{equation}
where
\begin{equation}
 d_{J,q}(\nu,\mu)=
 \frac{(\nu-\mu)_q(\nu)_{J-q}(J-2q+\mu)}
 {\mu\,q!(\mu+1)_{J-q}}\geq0.
 \label{eqS:d}
\end{equation}
\end{lemma}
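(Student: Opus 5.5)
We prove the connection formula with generating functions, using the classical Gegenbauer generating function
\begin{equation*}
 (1-2xt+t^2)^{-\nu}=\sum_{J\geq0}C_J^\nu(x)t^J .
\end{equation*}

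\textbf{Factoring the generating function.} First write
\begin{equation*}
 (1-2xt+t^2)^{-\nu}=(1-2xt+t^2)^{-\mu}\,(1-2xt+t^2)^{-(\nu-\mu)} .
\end{equation*}
This does not by itself give a $C^\mu$ expansion, because the second factor still depends on $x$. So we use a different route.

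\textbf{Route via the recurrences.} Both families satisfy the differentiation identity
\begin{equation*}
 \tfrac{d}{dx}C_J^\lambda=2\lambda C_{J-1}^{\lambda+1},
\end{equation*}
and the lowering identity
\begin{equation*}
 C_J^{\lambda}=\frac{\lambda}{J+\lambda}\bigl(C_J^{\lambda+1}-C_{J-2}^{\lambda+1}\bigr).
\end{equation*}
These only handle integer shifts, so we need something more general.

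\textbf{Coefficient extraction by orthogonality.} The cleanest general route is to compute
\begin{equation*}
 d_{J,q}=\frac{\langle C_J^\nu,C_{J-2q}^\mu\rangle_\mu}{\langle C_{J-2q}^\mu,C_{J-2q}^\mu\rangle_\mu},
 \qquad \text{weight } (1-x^2)^{\mu-1/2}.
\end{equation*}
Write $k=J-2q$. Apply Rodrigues' formula to $C_k^\mu$:
\begin{equation*}
 (1-x^2)^{\mu-1/2}C_k^\mu\propto \frac{d^k}{dx^k}(1-x^2)^{k+\mu-1/2}.
\end{equation*}
Integrating by parts $k$ times moves the derivatives onto $C_J^\nu$. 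By the differentiation identity,
\begin{equation*}
 \frac{d^k}{dx^k}C_J^\nu=2^k(\nu)_k\,C_{2q}^{\nu+k}.
\end{equation*}
This leaves the single integral
\begin{equation*}
 \int_{-1}^{1}C_{2q}^{\nu+k}(x)\,(1-x^2)^{k+\mu-1/2}\,dx .
\end{equation*}
To evaluate it, insert the explicit monomial expansion
\begin{equation*}
 C_{2q}^{\lambda}(x)=\sum_i\frac{(-1)^i(\lambda)_{2q-i}}{i!\,(2q-2i)!}(2x)^{2q-2i}.
\end{equation*}
Each term then gives a beta integral, and the sum collapses to a terminating ${}_2F_1$ at argument $1$. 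Chu--Vandermonde evaluates it as a multiple of $(\nu-\mu)_q$. Dividing by the norm $\langle C_k^\mu,C_k^\mu\rangle_\mu$, written with the duplication formula, should reproduce the stated $d_{J,q}$.

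\textbf{Main obstacle.} The hardest step is the bookkeeping of Gamma factors so that the result lands exactly on
\begin{equation*}
 \frac{(\nu-\mu)_q(\nu)_{J-q}(J-2q+\mu)}{\mu\,q!\,(\mu+1)_{J-q}} .
\end{equation*}
Two consistency checks guard against errors:
\begin{itemize}
 \item at $q=0$ the coefficient must equal the ratio of leading coefficients, $\dfrac{2^J(\nu)_J/J!}{2^J(\mu)_J/J!}=\dfrac{(\nu)_J}{(\mu)_J}$, which matches;
 \item at $\nu=\mu$ the factor $(0)_q$ kills every $q>0$.
\end{itemize}
Also, because $C_J^\nu$ has parity $(-1)^J$, only components $J-2q$ appear.

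\textbf{Nonnegativity.} Once the formula is established, the sign is immediate from $0<\mu\le\nu$:
\begin{itemize}
 \item $(\nu-\mu)_q\ge0$, since $\nu-\mu\ge0$;
 \item $(\nu)_{J-q}>0$ and $(\mu+1)_{J-q}>0$, since $\nu,\mu>0$;
 \item $J-2q+\mu>0$ and $\mu>0$.
\end{itemize}
Hence every $d_{J,q}\ge0$.
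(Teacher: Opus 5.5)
Your argument is correct, but it reaches the formula by a different route from the paper's. The paper never integrates. It substitutes the monomial expansion of $C_J^\nu$ into the inverse expansion $(2x)^N=N!\sum_s \frac{N-2s+\mu}{s!(\mu)_{N-s+1}}C^\mu_{N-2s}(x)$, which it reads off from the plane-wave identity by comparing $[a^N]$. The coefficient of $C^\mu_{J-2q}$ is then a single terminating ${}_2F_1(-q,\nu+J-q;\mu+J-2q+1;1)$, which Chu--Vandermonde evaluates.

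You instead project with the $C^\mu$ weight and apply Rodrigues' formula. The $k$ integrations by parts have vanishing boundary terms since $\mu>-\tfrac12$. The derivative identity then reduces everything to one beta-type integral, which becomes ${}_2F_1(-q,-q-k-\mu;1-\nu-k-2q;1)$ once duplication turns $2^{2s}\Gamma(s+\tfrac12)/(2s)!$ into $\sqrt{\pi}/s!$.

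The bookkeeping you flagged does close. First,
$\int_{-1}^{1}C^{\nu+k}_{2q}(x)(1-x^2)^{k+\mu-1/2}\,dx=\sqrt{\pi}\,\Gamma(k+\mu+\tfrac12)(\nu+k)_q(\nu-\mu)_q/\bigl(q!\,\Gamma(J-q+\mu+1)\bigr)$.
Next, multiply by the Rodrigues-derived factor $(2\mu)_k(\nu)_k/\bigl(k!(\mu+\tfrac12)_k\bigr)$ and divide by the norm $\pi2^{1-2\mu}\Gamma(k+2\mu)/\bigl(k!(k+\mu)\Gamma(\mu)^2\bigr)$. Finally, use $(\nu)_k(\nu+k)_q=(\nu)_{J-q}$ and duplication for $\Gamma(2\mu)$; this gives exactly $d_{J,q}$. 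It would be worth writing these lines out rather than leaving it at ``should reproduce.''

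The paper's version is shorter and purely algebraic: a polynomial identity needing no integrability, norms, or Rodrigues constants, and it reuses an identity already in the paper. Yours is the standard projection argument. It needs $\mu>-\tfrac12$, which is harmless here. In exchange it gives a structural reading of the factor $(\nu-\mu)_q$: the residual integral pairs $C^{\nu+k}_{2q}$ against the $C^{\mu+k}$ weight, so its vanishing at $\nu=\mu$ is just orthogonality. Your two abandoned openings (factoring the generating function, integer-shift recurrences) play no role and can be dropped.
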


\begin{proof}
Use the mutually inverse finite expansions
\begin{align}
 C_J^\rho(x)&=\sum_{r=0}^{\lfloor J/2\rfloor}
 (-1)^r\frac{(\rho)_{J-r}}{r!(J-2r)!}(2x)^{J-2r},
 \label{eqS:Gegen-monomial}\\
 (2x)^N&=N!\sum_{s=0}^{\lfloor N/2\rfloor}
 \frac{N-2s+\mu}{s!(\mu)_{N-s+1}}C_{N-2s}^\mu(x).
 \label{eqS:monomial-Gegen}
\end{align}
The second follows by comparing $[a^N]$ in
Eq.~\eqref{eqS:plane-wave}.  Insert Eq.~\eqref{eqS:monomial-Gegen} into
Eq.~\eqref{eqS:Gegen-monomial}.  The coefficient of $C_{J-2q}^\mu$ is
$(J-2q+\mu)T_{J,q}$, where
\begin{equation}
 T_{J,q}=\sum_{r=0}^{q}
 \frac{(-1)^r(\nu)_{J-r}}
 {r!(q-r)!(\mu)_{J-q-r+1}}.
\end{equation}
After setting $s=q-r$ and factoring terms independent of $s$,
\begin{equation}
 T_{J,q}=\frac{(-1)^q(\nu)_{J-q}}
 {q!(\mu)_{J-2q+1}}
 {}_2F_1\!\left(
 \begin{matrix}-q,\ \nu+J-q\\ \mu+J-2q+1\end{matrix};1\right).
\end{equation}
Chu--Vandermonde,
${}_2F_1(-q,a;c;1)=(c-a)_q/(c)_q$, gives
$(c-a)_q=(-1)^q(\nu-\mu)_q$ and therefore
\begin{equation}
 T_{J,q}=\frac{(\nu-\mu)_q(\nu)_{J-q}}
 {\mu\,q!(\mu+1)_{J-q}}.
\end{equation}
Multiplication by $J-2q+\mu$ proves Eq.~\eqref{eqS:d}.  Every displayed
factor has the asserted sign when $0<\mu\leq\nu$.
\end{proof}

Take $\nu=23/2$ and $\mu=(D-3)/2$.  Substitute
Eq.~\eqref{eqS:connection} into the positive $D=26$ expansion.  Every
coefficient in parameter $\mu$ is a finite nonnegative linear combination
of $D=26$ coefficients.  Hence
\begin{equation}
 B^D_{n,j}\geq0\qquad(3<D\leq26).
\end{equation}

Finally, at $n=1$,
\begin{equation}
 R_1(x)=\frac{25x^2-1}{8}.
\end{equation}
Since, for $\lambda=(D-3)/2$,
\begin{equation}
 C_2^\lambda(x)=2\lambda(\lambda+1)x^2-\lambda,
\end{equation}
one obtains
\begin{equation}
 R_1(x)=
 \frac{25}{4(D-3)(D-1)}C_2^{(D-3)/2}(x)
 +\frac{26-D}{8(D-1)}C_0(x).
\end{equation}
Thus $B^D_{1,0}<0$ for every $D>26$, proving sharpness.

\end{document}